\newtheorem{thm}{Theorem}
\newtheorem{prop}[thm]{Proposition}
\theoremstyle{remark}
\newcommand{\assign}{\leftarrow}
\newcommand{\ep}{\varepsilon}
\newcommand{\ZZ}{\mathbf{Z}}
\newcommand{\CC}{\mathbf{C}}
\newcommand{\CCxx}{\CC \llbracket x \rrbracket}
\newcommand{\fourier}{\mathcal{F}}
\begin{document}

\title{Faster exponentials of power series}
\author{David Harvey}

\begin{abstract}
We describe a new algorithm for computing $\exp f$ where $f$ is a power series in $\CCxx$. If $M(n)$ denotes the cost of multiplying polynomials of degree $n$, the new algorithm costs $(2.1666\ldots + o(1)) M(n)$ to compute $\exp f$ to order $n$. This improves on the previous best result, namely $(2.333\ldots + o(1)) M(n)$.
\end{abstract}

\maketitle

The author recently gave new algorithms for computing the square root and reciprocal of power series in $\CCxx$, achieving better running time constants than those previously known \cite{fast-series}. In this paper we apply similar techniques to the problem of computing $\exp f$ for a power series $f \in \CCxx$. Previously, the best known algorithm was that of van der Hoeven \cite[p.~6]{vdh}, computing $g = \exp(f) \bmod x^n$ in time $(7/3 + o(1)) M(n)$, where $M(n)$ denotes the cost of multiplying polynomials of degree $n$. We give a new algorithm that performs the same task in time $(13/6 + o(1)) M(n)$.

Van der Hoeven's algorithm works by decomposing $f$ into blocks, and solving $g' = f'g$ by operating systematically with FFTs of blocks. Our starting point is the observation that his algorithm computes \emph{too much}, in the sense that at the end of the algorithm, the FFT of every block of $g$ is known. Our new algorithm uses van der Hoeven's algorithm to compute the first half of $g$, and then extends the approximation to the target precision using a Newton iteration due to Brent \cite{brent} (see also \cite{hz} or \cite{bernstein} for other exponential algorithms based on a similar iteration). At the end of the algorithm, only the FFTs of the blocks of the first half of $g$ are known. In fact, the reduction in running time relative to van der Hoeven's algorithm turns out to be equal to the cost of these `missing' FFTs.

We freely use notation and complexity assumptions introduced in \cite{fast-series}. Briefly: `running time' always means number of ring operations in $\CC$. The Fourier transform of length $n$ is denoted by $\fourier_n(g)$, and its cost by $T(n)$. We assume that $T(2n) = (1/3 + o(1)) M(n)$ for a sufficiently dense set of integers $n$. For Proposition \ref{prop:exp} below, we fix a block size $m$, and for any $f \in \CCxx$ we write $f = f_{[0]} + f_{[1]} X + f_{[2]} X^2 + \cdots$ where $X = x^m$ and $\deg f_{[i]} < m$. The key technical tool is \cite[Lemma 1]{fast-series}, which asserts that if $f, g \in \CCxx$, $k \geq 0$, and if $\fourier_{2m}(f_{[i]})$ and $\fourier_{2m}(g_{[i]})$ are known for $0 \leq i \leq k$, then $(fg)_{[k]}$ may be computed in time $T(2m) + O(m(k+1))$.

We define a differential operator $\delta$ by $\delta f = x f'(x)$, and we set $\delta_k f = X^{-k} \delta (X^k f)$. In particular $\delta (f_{[0]} + f_{[1]}X + \cdots) = (\delta_0 f_{[0]}) + (\delta_1 f_{[1]}) X + \cdots$.

\begin{algorithm}
\label{algo:exp}
\dontprintsemicolon
\KwIn{$s \in \ZZ$, $s \geq 1$ \newline
$f \in \CCxx$, $f = 0 \bmod x$ \newline
$g_{[0]} = \exp(f_{[0]}) \bmod X$ \newline
$u = \exp(-f_{[0]}) \bmod X$ ($= g_{[0]}^{-1} \bmod X$)
}\;
\KwOut{$g = g_{[0]} + \cdots + g_{[2s-1]} X^{2s-1} = \exp(f) \bmod X^{2s}$}\;
\medskip\;
Compute $\fourier_{2m}(g_{[0]})$, $\fourier_{2m}(u)$\;
\lFor{$0 \leq k < s$}{compute $\fourier_{2m}((\delta f)_{[k]})}$\nllabel{line:delta-f}\;
\For{$1 \leq k < s$}{\nllabel{line:exp-1}
   $\psi \assign ((g_{[0]} + \cdots + g_{[k-1]} X^{k-1})((\delta f)_{[0]} + \cdots + (\delta f)_{[k]} X^k))_{[k]}$\nllabel{line:psi-1}\;
   Compute $\fourier_{2m}(\psi)$\;
   $\phi \assign u \psi \bmod X$\nllabel{line:phi}\;
   Compute $\fourier_{2m}(\delta_k^{-1} \phi)$\;
   $g_{[k]} \assign g_{[0]} (\delta_k^{-1} \phi) \bmod X$\nllabel{line:g}\;
   Compute $\fourier_{2m}(g_{[k]})$\nllabel{line:exp-2}\;
}\;
\lFor{$0 \leq k < s$}{\nllabel{line:quot-0}
   $q_{[k]} \assign (\delta f)_{[k]}$\;
}\;
\For{$s \leq k < 2s$}{\nllabel{line:quot-1}
   $\psi \assign ((q_{[0]} + \cdots + q_{[k-1]} X^{k-1})(g_{[0]} + \cdots + g_{[s-1]} X^{s-1}))_{[k]}$\nllabel{line:quot-mul}\;
   Compute $\fourier_{2m}(\psi)$\;
   $q_{[k]} \assign -u \psi \bmod X$\nllabel{line:q}\;
   Compute $\fourier_{2m}(q_{[k]})$\nllabel{line:quot-2}\;
}\;
\lFor{$0 \leq k < s$}{$\ep_{[k]} \assign \delta_{k+s}^{-1} q_{[k+s]} - f_{[k+s]}$}\nllabel{line:ep}\;
\lFor{$0 \leq k < s$}{compute $\fourier_{2m}(\ep_{[k]})}$\;
\lFor{$0 \leq k < s$}{$g_{[k+s]} \assign -((g_{[0]} + \cdots + g_{[k-1]} X^{k-1})(\ep_{[0]} + \cdots + \ep_{[k]} X^k))_{[k]}$}\nllabel{line:exp-f}\;
\caption{Exponential}
\end{algorithm}

\newpage 
\begin{prop}
\label{prop:exp}
Algorithm \ref{algo:exp} is correct, and runs in time $(13s - 4) T(2m) + O(s^2 m)$.
\end{prop}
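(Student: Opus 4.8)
The plan is to prove correctness and the running-time bound separately. For correctness I would use that $g=\exp f$ is the unique element of $\CCxx$ with $g(0)=1$ solving $\delta g=(\delta f)g$, that $\delta$ acts block-by-block as $\delta(\sum_k g_{[k]}X^k)=\sum_k(\delta_k g_{[k]})X^k$, and that on a polynomial of degree $<m$ the operator $\delta_k$ multiplies the coefficient of $x^j$ by $mk+j$, so $\delta_k$ is invertible whenever $k\ge1$. I would also record the exact identity $\exp f=G\exp(f-\log G)$, valid for any $G$ with $G(0)=1$, which is the sharp form of Brent's iteration. Two observations keep the cost bookkeeping uniform: every product arising in the algorithm is a product of two blocks, hence of degree $<2m$, so a length-$2m$ transform recovers it exactly and all transforms used are $\fourier_{2m}$, each of cost $T(2m)$; and every transform called for is of a quantity whose $\fourier_{2m}$ is already available (from the input, from line~\ref{line:delta-f}, or from an earlier iteration), so \cite[Lemma~1]{fast-series} applies to every product extraction.

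For the first loop I would prove by induction on $k$ (for $1\le k<s$) that $g_{[0]}+\cdots+g_{[k]}X^k\equiv\exp f\pmod{X^{k+1}}$. Taking block~$k$ of $\delta g=(\delta f)g$ gives $\delta_k g_{[k]}=((\delta f)g)_{[k]}$; the only contribution of $g_{[k]}$ to block~$k$ on the right is the low half of $(\delta f)_{[0]}g_{[k]}$, which is exactly the term dropped when the first factor in line~\ref{line:psi-1} is truncated to $k$ blocks, so $\delta_k g_{[k]}-(\delta f)_{[0]}g_{[k]}\equiv\psi\pmod{x^m}$ with $\psi$ as computed there. Writing $g_{[k]}=g_{[0]}h$, using that $\delta$ is a derivation and that $\delta g_{[0]}\equiv(\delta f)_{[0]}g_{[0]}\pmod{x^m}$ (since $g_{[0]}\equiv\exp(f_{[0]})$), the left-hand side collapses to $g_{[0]}\delta_k h$, whence $h=\delta_k^{-1}(u\psi)$ and $g_{[k]}=g_{[0]}\delta_k^{-1}(u\psi)$, which is lines~\ref{line:phi}--\ref{line:g}. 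As $\psi$ only involves $g_{[0]},\dots,g_{[k-1]}$ and $(\delta f)_{[0]},\dots,(\delta f)_{[k]}$, the recursion is well-founded; at the end $G:=g_{[0]}+\cdots+g_{[s-1]}X^{s-1}=\exp f\bmod X^s$.

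For the middle loop and line~\ref{line:ep} I would show that $q$ ends up as the first $2s$ blocks of $\delta G/G=\delta\log G$. Blocks $0,\dots,s-1$ agree with $(\delta f)_{[k]}$ because $G\equiv\exp f\pmod{X^s}$ forces $\delta G/G\equiv\delta f\pmod{X^s}$; for $s\le k<2s$ one takes block~$k$ of $\delta G=qG$, whose left side is $0$ as $\deg\delta G<ms$, isolates $q_{[k]}$ exactly as before (the dropped term now being the low half of $q_{[k]}g_{[0]}$), and gets $q_{[k]}=-u\psi$, as in line~\ref{line:q}. Then $\ep_{[k]}=\delta_{k+s}^{-1}q_{[k+s]}-f_{[k+s]}=(\log G-f)_{[k+s]}$, and $\log G-f$ is divisible by $X^s$ (as $\log G\equiv f\pmod{X^s}$), so writing $\ep:=\ep_{[0]}+\cdots+\ep_{[s-1]}X^{s-1}$ we have $\log G\equiv f+X^s\ep\pmod{X^{2s}}$. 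Finally, $\exp f=G\exp(f-\log G)$ and $f-\log G\equiv-X^s\ep\pmod{X^{2s}}$ give $\exp f\equiv G\exp(-X^s\ep)\equiv G(1-X^s\ep)=G-X^s(G\ep)\pmod{X^{2s}}$, so $g_{[k+s]}=-(G\ep)_{[k]}$; since $(G\ep)_{[k]}$ depends only on $g_{[0]},\dots,g_{[k]}$ and $\ep_{[0]},\dots,\ep_{[k]}$, this is what line~\ref{line:exp-f} extracts via \cite[Lemma~1]{fast-series}. That completes correctness.

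For the running time I would tally the transforms: two initial ones; the $s$ of line~\ref{line:delta-f}; six per iteration of the first loop (forming $\psi$, then $\fourier_{2m}(\psi)$, then $\phi$, then $\fourier_{2m}(\delta_k^{-1}\phi)$, then $g_{[k]}$, then $\fourier_{2m}(g_{[k]})$); four per iteration of the middle loop ($\psi$, $\fourier_{2m}(\psi)$, $q_{[k]}$, $\fourier_{2m}(q_{[k]})$); the $s$ for the $\ep_{[k]}$; and one per iteration of line~\ref{line:exp-f}. This totals $2+s+6(s-1)+4s+s+s=13s-4$ transforms, each $T(2m)$; everything else — the $O(m(k+1))$ overheads of the \cite[Lemma~1]{fast-series} applications (which sum to $O(s^2m)$), the componentwise $\delta_k^{\pm1}$, the pointwise products in the Fourier domain, the initial computation of the $(\delta f)_{[k]}$, the copies in line~\ref{line:quot-0}, and the truncations — is $O(s^2m)$, giving $(13s-4)T(2m)+O(s^2m)$. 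I expect the main obstacle to be the block-level bookkeeping in the first two phases: verifying that each $\psi$ depends only on blocks already produced, that the self-reference of $g_{[k]}$ (resp.\ $q_{[k]}$) is exactly the single low-half term removed by the truncation, and that $\delta g_{[0]}\equiv(\delta f)_{[0]}g_{[0]}\pmod{x^m}$ genuinely legitimises the step $g_{[k]}=g_{[0]}\delta_k^{-1}(u\psi)$; by comparison the Brent identity and the transform count are routine.
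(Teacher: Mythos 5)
Your proposal is correct and follows essentially the same three-phase structure as the paper's proof (van der Hoeven's block recurrence for $g_0$, the block-wise division recurrence for $q = \delta g_0 / g_0$, then a single Brent step $g_0(1-\ep X^s)$), and the transform tally $2+s+6(s-1)+4s+s+s = 13s-4$ agrees with the paper's accounting. The only cosmetic difference is in justifying line~\ref{line:g}: you substitute $g_{[k]} = g_{[0]}h$ and use $\delta g_{[0]} \equiv (\delta f)_{[0]} g_{[0]} \pmod{x^m}$ to reach $\psi = g_{[0]}\delta_k h$, while the paper reaches the same conclusion by multiplying $(\delta g)_{[k]} - \psi = g_{[k]}(\delta f)_{[0]}$ through by $u$ and invoking $\delta u = -(\delta f)u \bmod X$ to recognize $\delta_k(g_{[k]}u)$; these are mirror-image calculations and equally valid.
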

\begin{proof}
We first show that the loop in lines \ref{line:exp-1}--\ref{line:exp-2} (essentially van der Hoeven's exponential algorithm) correctly computes
 \[ g_0 = g \bmod X^s = g_{[0]} + \cdots + g_{[s-1]} X^{s-1} = \exp(f) \bmod X^s. \]
By definition $g_{[0]}$ is correct. In the $k$th iteration, assume that $g_{[0]}, \ldots, g_{[k-1]}$ have been computed correctly. Since $\delta g_0 = g_0 (\delta f) \bmod X^s$ we have
  \[ ((g_{[0]} + \cdots + g_{[k]} X^k)((\delta f)_{[0]} + \cdots + (\delta f)_{[k]} X^k))_{[k]} = (\delta g)_{[k]}, \]
and by construction
  \[ ((g_{[0]} + \cdots + g_{[k-1]} X^{k-1})((\delta f)_{[0]} + \cdots + (\delta f)_{[k]} X^k))_{[k]} = \psi. \]
Subtracting yields
 \[ (\delta g)_{[k]} - \psi =  g_{[k]} (\delta f)_{[0]} \bmod X, \]
and on multiplying by $u$ we obtain
\begin{align*}
 \phi = u \psi \bmod X & = (\delta g)_{[k]} u - g_{[k]} (\delta f)_{[0]} u \mod X \\
    & = (\delta g)_{[k]} u + g_{[k]} (\delta u) \mod X \\
    & = \delta_k (g_{[k]} u \bmod X)
\end{align*}
since $\delta u = -(\delta f)u \bmod X$. Therefore $g_{[k]}$ is computed correctly in line \ref{line:g}.

Next we show that lines \ref{line:quot-0}--\ref{line:quot-2} correctly compute
 \[ q = q_{[0]} + \cdots + q_{[2s-1]} X^{2s-1} = \frac{\delta g_0}{g_0} \bmod X^{2s}. \]
Since $\delta g_0 / g_0 \bmod X^s = \delta f \bmod X^s$, line \ref{line:quot-0} correctly computes $q_{[0]}, \ldots, q_{[s-1]}$. The loop in lines \ref{line:quot-1}--\ref{line:quot-2} computes $q_{[s]}, \ldots, q_{[2s-1]}$ using a similar strategy to the division algorithm in \cite[p.~6]{vdh}. Namely, in the $k$th iteration, assume that $q_{[0]}, \ldots, q_{[k-1]}$ are correct. Then
 \[ ((q_{[0]} + \cdots + q_{[k-1]} X^{k-1})(g_{[0]} + \cdots + g_{[s-1]} X^{s-1}))_{[k]} = \psi \]
and
 \[ ((q_{[0]} + \cdots + q_{[k]} X^k)(g_{[0]} + \cdots + g_{[s-1]} X^{s-1}))_{[k]} = (q g_0)_{[k]} = (\delta g_0)_{[k]} = 0 \]
since $\deg(\delta g_0) < sm$ and $k \geq s$. Subtracting, we obtain $g_{[0]} q_{[k]} = -\psi \bmod X$, so $q_{[k]}$ is computed correctly in line \ref{line:q}. (Note that the transforms of $q_{[0]}, \ldots, q_{[s-1]}$ used in line \ref{line:quot-mul} are already known, since they were computed in line \ref{line:delta-f}.)

At this stage we have
 \[ \frac{\delta g_0}{g_0} \bmod X^{2s} = q = \delta f + \delta (\ep X^s) \]
for some $\ep = \ep_{[0]} + \cdots + \ep_{[s-1]} X^{s-1}$. Line \ref{line:ep} computes the blocks of $\ep$. Then by logarithmic integration, we have
 \[ g_0 = \exp(f) \exp(\ep X^s) \bmod X^{2s}, \]
so
 \[ \exp(f) \bmod X^{2s} = g_0 \exp(-\ep X^s) \bmod X^{2s} = g_0 (1 - \ep X^s) \bmod X^{2s}. \]
Line \ref{line:exp-f} multiplies out the latter product to compute the remaining blocks of $g$.

We now analyse the complexity. Each iteration of lines \ref{line:psi-1}, \ref{line:quot-mul} and \ref{line:exp-f} costs $T(2m) + O(m(k+1))$ according to \cite[Lemma 1]{fast-series}; their total contribution is therefore $(3s-1) T(2m) + O(s^2 m)$. Lines \ref{line:phi}, \ref{line:g} and \ref{line:q} each require a single inverse transform, contributing a total of $(3s - 2) T(2m)$. The explicitly stated forward transforms contribute $(7s - 1) T(2m)$. The various other operations, including applications of $\delta$ and $\delta^{-1}$, contribute only $O(sm)$. The total is $(13s - 4) T(2m) + O(s^2 m)$.
\end{proof}

\begin{thm}
\label{thm:exp}
Let $f \in \CCxx$ with $f = 0 \bmod x$. Then $\exp f$ may be computed to order $n$ in time $(13/6 + o(1)) M(n)$.
\end{thm}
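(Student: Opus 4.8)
\section*{Proof proposal for Theorem \ref{thm:exp}}

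The strategy is to apply Proposition~\ref{prop:exp} with a block size $m$ chosen so that the dominant term $(13s-4)T(2m)$ becomes $(13/6 + o(1))M(n)$, supplying the required inputs $g_{[0]},u$ by recursion. Concretely, given $n$ I would fix an integer $s = s(n)$ that tends to infinity but slowly enough that $s = o(M(n)/n)$ (for instance $s = \lfloor \log\log n \rfloor$), and then choose a block size $m$ in the ``sufficiently dense'' set of \cite{fast-series} for which $T(2m) = (1/3 + o(1))M(m)$, arranged so that $2sm \ge n$ and $m = (1+o(1))\,n/(2s)$. Since $f = 0 \bmod x$ we have $f_{[0]} = 0 \bmod x^m$ (with $X = x^m$), so $\exp(\pm f_{[0]})$ are genuine elements of $\CCxx$; I would compute $g_{[0]} = \exp(f_{[0]}) \bmod x^m$ by a recursive call of the same procedure at order $m$, and $u = g_{[0]}^{-1} \bmod x^m$ by the reciprocal algorithm of \cite{fast-series}, each at cost $O(M(m))$. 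Algorithm~\ref{algo:exp} then returns $\exp(f) \bmod x^{2sm}$, which is truncated to order $n$. The recursion terminates because $m = (1+o(1))n/(2s) = o(n)$, so the orders involved decrease (super)geometrically to a base case handled in $O(1)$ operations.

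For the running time, Proposition~\ref{prop:exp} charges $(13s-4)T(2m) + O(s^2 m)$ to the top-level call. Substituting $T(2m) = (1/3 + o(1))M(m)$, and using the standard estimate that for slowly growing $s$ one has $2s\,M(m) \le M(2sm) = M((1+o(1))n) = (1+o(1))M(n)$ by superadditivity and smoothness of $M$ (as used throughout \cite{fast-series}), gives $s\,M(m) = (1/2 + o(1))M(n)$, so the top-level call costs $(13/6 + o(1))M(n)$. The remaining contributions are all $o(M(n))$: superadditivity gives $M(m) \le (1+o(1))M(n)/(2s) = o(M(n))$, which bounds the auxiliary reciprocal, and applying the same estimate level by level bounds the entire recursion (the orders form a geometric sequence starting from $m = o(n)$, so $\sum_{i \ge 1} M(n_i) \le M\bigl(\sum_{i\ge1} n_i\bigr) = o(M(n))$, and the base cases contribute only $O(\log n)$); finally $O(s^2 m) = O(s\cdot sm) = O(sn) = o(M(n))$ because $s = o(M(n)/n)$. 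Summing, $\exp f$ is computed to order $n$ in time $(13/6 + o(1))M(n)$.

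I expect the only delicate point to be calibrating the growth rate of $s$: it must be large enough that the recursive subproblems, the reciprocal, and the $O(s^2 m)$ error term are all absorbed into the $o(1)$, and that $2s\,M(n/(2s)) \sim M(n)$, yet small enough (and compatible with the density hypothesis on admissible block sizes) that a valid $m \approx n/(2s)$ actually exists. All of this is routine within the complexity framework of \cite{fast-series}; the real content of the theorem is Algorithm~\ref{algo:exp} and Proposition~\ref{prop:exp}, which are already established.
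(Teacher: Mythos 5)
Your proposal is correct and takes essentially the same route as the paper, whose proof consists of the single sentence ``apply the proof of \cite[Theorem 3]{fast-series} to Proposition \ref{prop:exp}, with $r = 2s$''; your write-up just makes explicit the standard block-decomposition reduction (slowly growing number of blocks, $m$ chosen in the dense set so that $T(2m) = (1/3+o(1))M(m)$, base case by recursion plus one reciprocal, error terms absorbed via superadditivity and smoothness of $M$) that the cited proof carries out. One small slip: $f = 0 \bmod x$ gives $f_{[0]} = 0 \bmod x$, not $\bmod x^m$, but that is exactly what is needed for $\exp(\pm f_{[0]})$ to be well defined, so the argument is unaffected.
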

\begin{proof}
Apply the proof of \cite[Theorem 3]{fast-series} to Proposition \ref{prop:exp}, with $r = 2s$.
\end{proof}

\bibliographystyle{amsalpha}
\bibliography{fast-exp}

\providecommand{\bysame}{\leavevmode\hbox to3em{\hrulefill}\thinspace}
\providecommand{\MR}{\relax\ifhmode\unskip\space\fi MR }
\providecommand{\MRhref}[2]{%
  \href{http://www.ams.org/mathscinet-getitem?mr=#1}{#2}
}
\providecommand{\href}[2]{#2}
\begin{thebibliography}{vdH06}

\bibitem[Ber04]{bernstein}
Daniel Bernstein, \emph{Removing redundancy in high-precision {N}ewton
  iteration}, unpublished, available at
  \texttt{http://cr.yp.to/papers.html\#fastnewton}, 2004.

\bibitem[Bre76]{brent}
Richard~P. Brent, \emph{Multiple-precision zero-finding methods and the
  complexity of elementary function evaluation}, Analytic computational
  complexity ({P}roc. {S}ympos., {C}arnegie-{M}ellon {U}niv., {P}ittsburgh,
  {P}a., 1975), Academic Press, New York, 1976, pp.~151--176.

\bibitem[Har09]{fast-series}
David Harvey, \emph{Faster algorithms for the square root and reciprocal of
  power series}, preprint available at \texttt{http://arxiv.org/abs/0910.1926},
  2009.

\bibitem[HZ04]{hz}
Guillaume Hanrot and Paul Zimmermann, \emph{Newton iteration revisited},
  unpublished, available at
  \texttt{http://www.loria.fr/\~{}zimmerma/papers/fastnewton.ps.gz}, 2004.

\bibitem[vdH06]{vdh}
Joris van~der Hoeven, \emph{Newton's method and {FFT} trading}, preprint
  available at \texttt{http://www.texmacs.org/joris/fnewton/fnewton-abs.html},
  2006.

\end{thebibliography}

\end{document}